\newtheorem{prop}{\bf Proposition}
\newtheorem{lemma}{\bf Lemma}
\DeclareMathAlphabet{\matheur}{U}{eur}{m}{n}
\DeclareMathAlphabet{\matheurb}{U}{eur}{b}{n}
\DeclareMathAlphabet{\matheus}{U}{eus}{m}{n}
\DeclareMathAlphabet{\matheuf}{U}{euf}{m}{n}
\newcommand{\bfThe}{\mathbf{\Theta}}
\newcommand{\bfdel}{\mbox{\boldmath$\delta$}}
\newcommand{\bfDel}{{\bf\Delta}}
\newcommand{\hs}{\hspace{4mm}}
\renewcommand{\t}{^{\mbox{\tiny\sf T}}}
\newcommand{\IC}{{\mathbb{C}}}
\newcommand{\IR}{{\mathbb{R}}}
\newcommand{\IU}{{\mathbb{U}}}
\newcommand{\IV}{{\mathbb{V}}}
\newcommand{\IW}{{\mathbb{W}}}
\newcommand{\II}{{\mathbb{I}}}
\newcommand{\diag}{{\rm diag}}
\newenvironment{mat}{\left[\begin{array}}{\end{array}\right]}
\newcommand{\rhoU}{{\bar\rho}}
\newcommand{\rhoL}{{\varrho}}
\DeclareMathAlphabet{\matheur}{U}{eur}{m}{n}
\DeclareMathAlphabet{\matheurb}{U}{eur}{b}{n}
\DeclareMathAlphabet{\matheus}{U}{eus}{m}{n}
\DeclareMathAlphabet{\matheuf}{U}{euf}{m}{n}
\newcommand{\LFT}{{\mathcal{F}}_\ell}
\newcommand{\calG}{{\mathcal{G}}}
\newcommand{\RHinf}{{\bf RH}_\infty}
\title
{Robust Instability Radius for Multi-agent Dynamical Systems \\ with Cyclic Structure}
\author{Shinji Hara${}^{1\dagger}$, Tetsuya Iwasaki${}^{2}$ and Yutaka Hori${}^{3}$} 
\abstract{%
This paper is concerned with robust instability analysis  for linear multi-agent dynamical systems with cyclic structure. 
This relates to interesting and important periodic oscillation phenomena in biology and neuronal science, 
since the nonlinear phenomena often occur when the linearized model around an equilibrium point is unstable. 
We first make a problem setting on the analysis and define the notion of robust instability radius (RIR) as 
a quantitative measure for maximum allowable stable dynamic perturbation in terms of the $H$-infinity norm. 
After showing lower bounds of the RIR, we derive the exact RIR,  which is  analytic and scalable, 
for first order time-lag agents.    
Finally, we make a remark on the potential applicability to some classes of higher order systems.  
}  
\keywords{%
Robust instability radius, Multi-agent dynamical systems, Cyclic network
}
\begin{document}

\maketitle

\section{Introduction} 
\label{sec:Intro}

There are a number of interesting and important periodic oscillation phenomena 
in biology such as Repressilator \cite{Elowitz2000} in synthetic biology, 
spike-type periodic signals in neuronal dynamics \cite{FHNmodel}, 
periodic pattern generation by Turing instability \cite{YMKH:MBMC2015}, and so on. 
Many of these cases are related to instability of the linearized model around an equilibrium point, and 
it is generally difficult to derive the exact mathematical models and reduced order approximate models 
are often utilized for the analysis.   
Hence, robust instability analysis against dynamic uncertainties is very important to analyze the persistence 
of oscillation phenomena theoretically.  

Motivated by this, the authors have proposed a robust instability problem as a new control problem 
\cite{HIH:LCSS2020, HIH:Automatica2020}.  
It should be emphasized that the robust instability analysis is similar to but quite different from
the robust stability analysis.   Actually, the former is a strong stabilization problem \cite{Youla:Automatica1974} 
to find a minimum norm stable perturbation that stabilizes a given unstable system  
when the uncertainty is modeled by a ball measured by the $H_\infty$ norm.  
This clearly indicates the difficulty of the problem. 
For example, the small gain condition in terms of the $L_\infty$ norm provides only 
a sufficient condition for the robust instability as seen in e.g., \cite{Inoue:ECC2013,Tsypkin:Automatica1994}. 
The reason is as follows.  
For robust stability analysis, the system instability can be detected when one of stable poles 
reaches the imaginary axis.  
However, for the case of robust instability, the system may remain unstable even if some of unstable poles 
reach the imaginary axis.  
Hence, the infinity norm does not work in general for robust instability analysis. 

Consequently, our main question here is "{\em Under what condition does the infinity-norm 
work as a measure even for the robust instability analysis?}" 
There is a partial answer to the question in \cite{HIH:LCSS2020, HIH:Automatica2020}, where   
some exact results on the robust instability radius for linear systems are shown.  
Also, the robust instability margin was investigated by taking into account the change of the equilibrium point 
caused by the static gain change for a class of nonlinear systems, and the effectiveness of the exact results 
is confirmed by applications to FitzHugh-Nagumo model for neuronal dynamics and Repressilator in synthetic biology.    

This paper is concerned with the robust instability problem for multi-agent dynamical systems with cyclic structure. 
A long-term goal of this type of research is to establish a  new theory for a class of nonlinear networked systems 
which covers several biomolecular systems in synthetic biology \cite{Alon2006}. 
However, this paper treats only the linear case as an important fundamental research topic. The main contribution 
of this paper is a characterization of the exact value of maximum allowable dynamic perturbation, 
or the robust instability radius (RIR), for the persistence of instability for multi-agent dynamical systems 
with any number of first order agents.  
The result is scalable as well as analytic, and hence we can handle multi-agent systems with any number of agents.   
This is in contrast with the authors' previous investigation  \cite{HIH:LCSS2020, HIH:Automatica2020}. 

The first key for the analysis is to show the equivalence of the classes of heterogeneous and homogeneous 
perturbations in terms of robust instability.  
In other words, we can show that the class of worst case perturbations  includes a subset of homogeneous perturbations.  
This nice property can make the problem simpler and yield useful lower bounds on the RIR. 
Then, we follow the two-step procedure of constructing an all-pass perturbation proposed 
in  \cite{HIH:LCSS2020, HIH:Automatica2020} to get the exact result for first order time-lag systems. 
It should be emphasized that the final RIR result is the same as the robust stability result in
\cite{HIH:Automatica2019} based on the generalized frequency variable framework \cite{HTI:TAC2014}.  

The remainder of this paper is organized as follows. 
The problem formulation and the preliminary analysis are presented in Section~\ref{sec:MultiAgentSytem}. 
Section~\ref{sec:GeneralCycNetwork} is devoted to the analysis of the RIR for cyclic network structure 
with multiplicative uncertainties, where some lower bounds of the RIR are shown. 
In Section~\ref{sec:ExactRIR}, we provide the exact RIR analysis for first order time-lag systems 
and remark the possible extensions. 
Section~\ref{sec:Concl} summarizes the contributions of this paper 
and addresses some future research directions.

We use the following notation. 
The set of proper stable real rational functions is denoted by $\RHinf$, 
and $\| \cdot \|_{H_\infty}$ represents the $H_\infty$ norm. 
$\| \cdot \|_{L_\infty}$ denotes the $L_\infty$ norm for rational functions.   
$\LFT(H,A)$ denotes the lower linear fractional transformation of $A$ by $H$:
\[
z=\LFT(H,A) w \; \Leftrightarrow \; 
\begin{mat}{c} z \\ y \end{mat} = H
\begin{mat}{c} w \\ u \end{mat}, \; 
u = A y.
\]
The set of integers $1,\ldots,n$ is denoted by $\II_n$.
\section{Robust Instability for Multi-agent Dynamical Systems}
\label{sec:MultiAgentSytem}
\subsection{Problem formulation}
\label{subsec:ProblemFormulation}

Consider the multi-agent system with $n$ uncertain dynamic SISO agents
described by
\begin{equation} \label{sys}
\begin{mat}{c} z \\ y \end{mat}=H(s)
\begin{mat}{c} w \\ u \end{mat}, \hs
\begin{array}{l} w=\Delta z, \\ u=Ay, \end{array} , 
\end{equation}
where $z(t),w(t),y(t),u(t)\in\IR^n$ are signals, $A\in\IR^{n\times n}$ is a constant matrix representing 
the interaction among the agents. 
We assume that $H(s)$ has a special form  
\[
H(s):=H_o(s)\otimes I_n 
\]
with a $2\times2$ transfer function $H_o(s)$.  
This  means that the multi-agent system is nominally homogeneous, and it is represented by 
 \[
 G(s):=\LFT(H(s),A) . 
 \]
$\Delta$ represents the stable uncertainties in the agents which are independent among the agents. 

We are interested in the condition under which the system is robustly unstable against 
$\Delta\in\bfDel\subset\RHinf^{n\times n}$ with a bound on the norm $\|\Delta\|_{H_\infty}$, where  
\begin{equation} \label{bfDel}
\bfDel:=\{\diag(\delta_1,\ldots,\delta_n):\,\delta_i\in\RHinf\,\}.
\end{equation}
Since the perturbed system is a feedback system consisting of $G(s)$ and $\Delta$, 
the system is said to be robustly unstable if the characteristic equation
\begin{equation} \label{CE}
\det(I-\Delta G(s))=0  
\end{equation}
has at least one root in the closed right half plane for all $\Delta\in\bfDel$
with the norm bound.
Clearly, the robust instability condition is violated when there exists 
$\Delta\in\bfDel$ that stabilizes (\ref{CE}) while satisfying the norm bound.  
Following the definition of the Robust Instability Radius (RIR)  in \cite{HIH:LCSS2020, HIH:Automatica2020}, 
the dynamic RIR (or simply RIR) $\rho_*$ is defined as the smallest magnitude of 
the perturbation that stabilizes the system: 
\begin{equation} \label{rir}
\hspace*{-4mm}
\begin{array}{l}
\rho_*:={\displaystyle\inf_{\Delta\in\bfDel}} \{~\|\Delta\|_{H_\infty}:~ \\
\hspace{10mm}\det(I-\Delta(s)G(s))=0 ~ \Rightarrow ~ \Re(s)<0~\}.
\end{array}
\end{equation}
The problem is to find a method for calculating $\rho_*$.

Let us introduce three related notions of RIRs to get the lower/upper bounds of the dynamic RIR. 
The dynamic RIR against homogeneous uncertainty $\rho_h$. :
\begin{equation} \label{hrir}
\hspace*{-2mm}
\begin{array}{l}
\rho_h:=\displaystyle\inf_{\delta\in\RHinf} ~\{~\|\delta\|_{H_\infty}:~ \\
\hspace{10mm} \det(I-\delta(s)G(s))=0 ~ \Rightarrow ~ \Re(s)<0~\}.
\end{array}
\end{equation}
The real/complex parametric RIRs, $\rho_r$ and $\rho_c$:
\begin{equation} \label{rrir}
\hspace*{-2mm}
\begin{array}{l} 
\rho_r:=\displaystyle \inf_{\Delta\in\bfDel_r} ~\{~\|\Delta\|:~ \\
\hspace{10mm}\det(I-\Delta G(s))=0 ~~ \Rightarrow ~~ \Re(s)<0~\} , 
\end{array}
\end{equation}
\begin{equation} \label{crir}
\hspace*{-2mm}
\begin{array}{l}
\rho_c:=\displaystyle \inf_{\Delta\in\bfDel_c} ~\{~\|\Delta\|:~ \\
\hspace{10mm}\det(I-\Delta G(s))=0 ~~ \Rightarrow ~~ \Re(s)<0~\} , 
\end{array}
\end{equation}
where $\bfDel_r$ and $\bfDel_c$ are  the set of $n\times n$ real diagonal matrices 
and the set of $n\times n$ complex diagonal matrices, respectively.

\subsection{Preliminary analysis}
\label{subsec:Preliminaries}

Without loss of generality, let us consider the case where $A$ is diagonalizable.\footnote{
When $A$ is not diagonalizable, an infinitesimal perturbation of $A$ will make it
diagonalizable without altering the robust instability radius due to 
continuous dependence of the characteristic roots on $A$.
}
Let the spectral decomposition of $A$ be given by
\[
A=T\Lambda T^{-1}, \hs \Lambda=\diag(\lambda_1,\ldots,\lambda_n).
\]
Then condition (\ref{CE}) is equivalent to
\begin{equation} \label{CE2}
\det(I-\hat\Delta(s)\hat G(s))=0, 
\end{equation}
where 
\[
\hat G(s):=\LFT(H(s),\Lambda), \;
\hat\Delta(s):=T^{-1}\Delta(s) T,
\]
because $G(s)=T\hat G(s)T^{-1}$ holds.  
Note that
\[
\hat G(s) := \diag(g_1(s),\ldots,g_n(s)) 
\]
with 
$g_i(s):=\LFT(H_o(s),\lambda_i), \; i\in\II_n$. 
Moreover, we have $\hat\Delta=\Delta$ if the perturbation is homogeneous,
i.e. $\Delta=\delta I$, %
if $A$ is normal and $T$ is unitary. These special cases in combination with the
small gain argument give the following result.

\medskip

\begin{prop} \label{prop:ul}
{\em
Consider the robust instability radius in (\ref{rir}) for system (\ref{sys}).
Suppose $G(s)$ is unstable and define
\begin{equation} \label{rhou}
\hspace*{-4mm}
\begin{array}{l}
\rhoU:= \displaystyle\inf_{\delta\in\RHinf} ~\{~\|\delta\|_{H_\infty}:~ 
1-\delta(s)g_i(s)=0 ~~ \\
\hspace{30mm} \Rightarrow ~~ \Re(s)<0,~~ \forall~i\in\II_n~\}
\end{array}
\end{equation}
\begin{equation} \label{rhol}
\rhoL_p:=\min_{i\in\II_n}~ 1/\|g_i\|_{L_\infty}.
\end{equation}
\begin{itemize}
\item[(a)] $\rho_*\leq\rhoU$ and $\rho_h=\rhoU$.
\item[(b)] $\rhoL_p\leq\rho_*$ if $A$ is normal, and  
$\rhoL_p\leq\rho_h$ if $A$ is diagonalizable.
\end{itemize}
}
\end{prop}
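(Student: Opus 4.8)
The plan is to exploit the block-diagonalization $G(s)=T\hat G(s)T^{-1}$ with $\hat G=\diag(g_1,\ldots,g_n)$ established in the preliminary analysis, under which the characteristic equation factors transparently for homogeneous perturbations. First I would prove the identity $\rho_h=\rhoU$. For a homogeneous $\Delta=\delta I$, similarity invariance of the determinant gives $\det(I-\delta G)=\det(I-\delta\hat G)=\prod_{i\in\II_n}(1-\delta g_i)$, which holds whenever $A$ is diagonalizable since the scalar $\delta$ commutes with $T$. Thus the roots of the characteristic equation are exactly the union over $i$ of the roots of $1-\delta g_i=0$, and the stabilization requirement ``$\det(I-\delta G)=0\Rightarrow\Re(s)<0$'' in (\ref{hrir}) is equivalent to ``$1-\delta g_i=0\Rightarrow\Re(s)<0$ for all $i\in\II_n$,'' which is precisely the constraint defining $\rhoU$ in (\ref{rhou}). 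The two infima are then taken over identical feasible sets with the same objective, so $\rho_h=\rhoU$.

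For the inequality $\rho_*\leq\rhoU$ I would argue by feasible-set inclusion. Every homogeneous perturbation $\delta I$ belongs to $\bfDel$ and satisfies $\|\delta I\|_{H_\infty}=\|\delta\|_{H_\infty}$, so each $\delta$ feasible for $\rho_h$ produces a feasible $\Delta=\delta I$ for $\rho_*$ of the same norm. The infimum defining $\rho_*$ is therefore taken over a larger feasible set, yielding $\rho_*\leq\rho_h=\rhoU$ and completing part (a).

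For the lower bound (b) I would use a homotopy/crossing argument. Suppose $\Delta$ stabilizes the system; deforming $\tau\Delta$ from $\tau=0$ (where the loop reduces to the unstable $G$) to $\tau=1$ (stable), some closed-loop root must cross the imaginary axis at a frequency $\omega^*$ and parameter $\tau^*\in(0,1]$ where $\det(I-\tau^*\Delta(j\omega^*)G(j\omega^*))=0$. Then $1$ is an eigenvalue of $\tau^*\Delta G$, so its spectral norm is at least one and $1\leq\tau^*\|\Delta(j\omega^*)\|\,\|G(j\omega^*)\|$. When $A$ is normal I may take $T$ unitary, whence $\|G(j\omega)\|=\|\hat G(j\omega)\|=\max_i|g_i(j\omega)|\leq\max_i\|g_i\|_{L_\infty}=1/\rhoL_p$; combined with $\tau^*\leq1$ and $\|\Delta(j\omega^*)\|\leq\|\Delta\|_{H_\infty}$ this forces $\|\Delta\|_{H_\infty}\geq\rhoL_p$, i.e. $\rho_*\geq\rhoL_p$. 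For merely diagonalizable $A$ I would restrict to homogeneous $\delta$ and use the scalar factorization above: a crossing requires $\tau^*\delta(j\omega^*)g_i(j\omega^*)=1$ for some $i$, giving $|\delta(j\omega^*)|=1/(\tau^*|g_i(j\omega^*)|)\geq1/\|g_i\|_{L_\infty}\geq\rhoL_p$, hence $\rho_h\geq\rhoL_p$.

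The main obstacle I anticipate is making the crossing argument rigorous, in particular justifying that the closed-loop right-half-plane root count changes only through an imaginary-axis crossing and that this count at $\tau=0$ correctly reflects the instability of $G$ despite the degeneracy of the return difference $\det(I-\Delta G)$ there. This requires careful coprime-factorization bookkeeping of how the open-loop poles of $G$ enter the closed-loop characteristic polynomial, together with continuous dependence of the roots on $\tau$; the norm estimates themselves are routine once this topological part is settled.
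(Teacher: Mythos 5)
Your proposal is correct and follows essentially the same route as the paper: part (a) is exactly the paper's observation that $\rhoU$ is the RIR restricted to the homogeneous subclass $\Delta=\delta I$ (feasible-set inclusion plus the determinant factorization $\det(I-\delta G)=\prod_i(1-\delta g_i)$), and part (b) is the small-gain/imaginary-axis-crossing argument that the paper delegates to the cited result of Inoue \emph{et al.} The only difference is that you spell out the homotopy details (and honestly flag the root-continuity caveat) where the paper simply cites the reference.
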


\smallskip

\begin{proof}
Statement (a) follows by noting that $\rhoU$ is the robust instability
radius with respect to homogeneous perturbation $\Delta=\delta I$.
The former part of statement (b) is from Inoue {\em et al.} \cite{Inoue:ECC2013}.
\end{proof}
\section{Cyclic Network with Multiplicative Uncertainties}
\label{sec:GeneralCycNetwork}

This section provides some lower bounds on the RIRs  for the case of  
cyclic network structure with multiplicative uncertainties. 

\subsection{Equivalence of homogeneous and heterogeneous perturbations}
\label{subsec:EquiHeteroHomo}

Consider a set of uncertain $n$ agents $(1+\delta_i(s))h(s)$, $i=1,\ldots,n$, with cyclic connections. 
This is a special case of (\ref{sys}) where $G(s)$ belongs to %
\begin{eqnarray} \label{Gc} 
&& \calG_c := 
\{  G(s) = {\LFT(H_o(s)\otimes I_n,A) } \; | \;\;  \nonumber \\
&&
\hspace*{-12mm}
H_o(s) = \begin{mat}{cc} 0 & h(s) \\ 1 & h(s) \end{mat}, \;
A=\begin{mat}{cc} o\t & -\mu \\ -\mu I_{n-1} & o \end{mat} \} , 
\end{eqnarray}
where $\mu\in\IR$ is a positive scalar representing the strength of the interaction 
and $o\in\IR^{n-1}$ is a zero vector.
Note that the characteristic equation for the feedback system of  
$G(s) \in \calG_c$ and $\Delta\in\bfDel$ is given by
\begin{equation} \label{checyc}
1+\mu^nh(s)^n\prod_{i=1}^n(1+\delta_i(s))=0.
\end{equation}
The following result is instrumental for reducing the instability analysis 
of the interconnected system into that of individual subsystems. 

\begin{lemma} \label{lem:convex}
{\em 
Let an integer $n>0$ and a real number $r>0$ be given and consider 
the following sets of complex numbers:
\[
\IW:=\{\,(1+\delta)^n:\,\delta\in\bfdel_c\,\}, 
\]
 \[
\IV:=\{\,\prod_{i=1}^n(1+\delta_i):\,\delta_i\in\bfdel_c\,\}, 
\]
\[
\bfdel_c:=\{\delta\in\IC:\,|\delta|\leq r\,\}.
\]
Then $\IW=\IV$.
}
\end{lemma}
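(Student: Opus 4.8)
The plan is to prove the two inclusions separately. The inclusion $\IW \subseteq \IV$ is immediate: taking $\delta_1 = \cdots = \delta_n = \delta$ in the definition of $\IV$ reproduces $(1+\delta)^n$, so every element of $\IW$ lies in $\IV$. The substance of the lemma is the reverse inclusion $\IV \subseteq \IW$, which I would recast geometrically. Writing $z_i := 1+\delta_i$, each $z_i$ ranges over the closed disk $D := \{z \in \IC : |z-1| \le r\}$, and I must show that every product $P := \prod_{i=1}^n z_i$ with $z_i \in D$ has an $n$-th root in $D$, i.e. some $w \in D$ with $w^n = P$.

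First I would fix the modulus. All $n$ roots of $P$ share the modulus $\rho_0 := |P|^{1/n} = \big(\prod_i |z_i|\big)^{1/n}$. Since each $|z_i|$ lies in $[\max(0,1-r),\,1+r]$, so does their geometric mean $\rho_0$, and this is exactly the set of moduli attained on $D$; hence the circle $|z|=\rho_0$ meets $D$ in a (possibly full) arc $\{\rho_0 e^{i\theta} : |\theta| \le \alpha(\rho_0)\}$, where the half-width satisfies $\cos\alpha(\rho) = (\rho^2+1-r^2)/(2\rho)$, with $\alpha(\rho)=\pi$ when the whole circle lies in $D$. A root of $P$ lies in $D$ precisely when its argument falls in this arc modulo $2\pi$. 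The $n$ roots are equally spaced with angular gap $2\pi/n$, so if $\alpha(\rho_0) \ge \pi/n$ the arc is at least as wide as the gap and necessarily contains a root, settling this case at once.

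The essential case is $\alpha(\rho_0) < \pi/n$. Here I would use the principal arguments $\theta_i := \arg z_i$, which satisfy $|\theta_i| \le \alpha(|z_i|)$, and take the specific root $w := \rho_0\, e^{\,i(\theta_1+\cdots+\theta_n)/n}$, for which $w^n = P$. This $w$ lies in $D$ as soon as $|\frac1n\sum_i \theta_i| \le \alpha(\rho_0)$, so everything reduces to the envelope inequality
\[
\sum_{i=1}^n \alpha(|z_i|) \le n\,\alpha(\rho_0), \qquad \prod_i |z_i| = \rho_0^{\,n},
\]
i.e. to showing that, for a fixed product of moduli, the total admissible angle is maximized by the equal split. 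Setting $t := \log\rho$ and $c(t) := \cos\alpha(e^t) = \frac12\big(e^t + (1-r^2)e^{-t}\big)$, a short computation gives $c''=c$ and hence $\frac{d^2}{dt^2}\alpha(e^t) = -c\,[\,1-c^2+(c')^2\,]\,(1-c^2)^{-3/2}$, whose sign is opposite to that of $c = \cos\alpha$. Thus $\alpha(e^t)$ is concave exactly where $\alpha \le \pi/2$.

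I expect the main obstacle to be precisely this concavity issue when $r>1$. For $r \le 1$ the disk $D$ avoids the origin, every admissible angle obeys $\alpha < \pi/2$, so $\alpha(e^t)$ is globally concave and the envelope inequality is immediate from Jensen's inequality. For $r>1$ the disk encircles the origin, $\alpha$ exceeds $\pi/2$ on an inner range of moduli, $\alpha(e^t)$ acquires a convex portion, and the supporting-line argument fails globally. I would resolve this by exploiting the cap $|z_i| \le 1+r$: in the regime $\alpha(\rho_0) < \pi/n \le \pi/2$ the mean $\rho_0$ sits in the concave range, and the modulus cap prevents a maximizer from pushing any factor far enough into the convex range to overtake the equal split, which I would confirm by a short exchange argument moving two unequal factors toward their geometric mean. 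Establishing this confined optimality is the crux of the proof.
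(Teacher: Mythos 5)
Your argument is, in substance, the paper's own argument written in polar coordinates. The paper proves $\IV\subseteq\IW$ by taking principal logarithms, so that (\ref{replace}) becomes the statement that the average of $n$ points of $\bfThe=\log(1+\bfdel_c)$ again lies in $\bfThe$, and then invokes the convexity of $\bfThe$ (Lemma~\ref{lem:LogConvex}, stated without proof). In the coordinates $(t,\theta)=(\log\rho,\theta)$ one has $\bfThe=\{(t,\theta):|\theta|\le\alpha(e^t)\}$, so the convexity of $\bfThe$ is exactly your concavity of $t\mapsto\alpha(e^t)$, and your choice $w=\rho_0e^{j(\theta_1+\cdots+\theta_n)/n}$ is exactly the averaged logarithm. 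Where you go beyond the paper is that you actually \emph{prove} the concavity, via the clean identity $c(t)=\cos\alpha(e^t)=\tfrac12\bigl(e^t+(1-r^2)e^{-t}\bigr)$, $c''=c$, which shows $\alpha(e^t)$ is concave wherever $\cos\alpha\ge0$; for $r\le1$ this holds on the whole disk and Jensen finishes the job. That is a genuine added value: it supplies the missing proof of Lemma~\ref{lem:LogConvex}.

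The gap you flag at the end is real: for $r>1$ your proof is not complete, since the supporting-line argument fails on the inner range of moduli where $\alpha>\pi/2$, and the ``short exchange argument'' you gesture at is not carried out (note that a pairwise exchange $g(t_i)+g(t_j)\le 2g(\tfrac{t_i+t_j}{2})$ can only be justified when the \emph{pairwise} midpoint lies in the concave range, which is not guaranteed even when the overall mean does, so iterating exchanges needs care). You should be aware, however, that the paper's proof has exactly the same limitation: the principal logarithm is undefined once the disk $1+\bfdel_c$ meets the closed negative real axis, and Lemma~\ref{lem:LogConvex} is explicitly restricted to $0<r<1$, so the published argument also establishes the lemma only for $r<1$ despite the statement allowing any $r>0$. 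In the paper's application the radius of interest is $\rhoL_+=1-K/(\mu\cos\theta_n)<1$, so nothing downstream is affected; but as a proof of the lemma as stated, both your write-up and the paper's leave $r\ge1$ open, and you are the only one who says so. If you restrict to $r\le1$ (or note that this is the only case used), your proof is complete and slightly stronger than the paper's, since it also covers the preliminary case $\alpha(\rho_0)\ge\pi/n$ explicitly.
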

\begin{proof}
The inclusion relationship $\IW\subseteq\IV$ is obvious. We prove
$\IV\subseteq\IW$ by showing that, for arbitrarily chosen
$\delta_i\in\bfdel_c$, $i=1,2,\ldots,n$, there exist $\delta\in\bfdel_c$ 
such that
\begin{equation} \label{replace}
(1+\delta)^n=\prod_{i=1}^n(1+\delta_i),
\end{equation}
or equivalently,
\[
\log(1+\delta)=\frac{1}{n}\sum_{i=1}^n\log(1+\delta_i).
\]
This is the case because the average of arbitrary $n$ elements of the set
$\bfThe:=\log(1+\bfdel_c)$ belongs to the set $\bfThe$ due to its convexity
from Lemma \ref{lem:LogConvex} shown below without proof.
\end{proof} 

\medskip

\begin{lemma} \label{lem:LogConvex}
{\em 
Let $r\in\IR$ be given such that $0<r<1$, and consider\footnote{
The function $\log(z)$ for $z\in\IC$, not negative real, is the principal 
logarithm defined as
\[
\log(z):=\log(a)+j\theta
\]
where $a,\theta\in\IR$ are uniquely determined from 
\[
z=ae^{j\theta}, \hs a>0, \hs \theta\in(-\pi,\pi).
\]
}
\[
\bfThe:=\log(1+\bfdel), \hs
\bfdel:=\{\delta\in\IC:\,|\delta|\leq r\,\}.
\]
The set $\bfThe$ is convex.
}
\end{lemma}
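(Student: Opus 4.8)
The plan is to exploit the conformal geometry of the principal logarithm. Since $0<r<1$, the set $1+\bfdel$ is the closed disk $D$ centered at $1$ with radius $r$, so every point satisfies $\Re(z)\geq 1-r>0$; in particular $D$ avoids the branch cut $(-\infty,0]$. On this region the principal logarithm is holomorphic and globally injective (since $\exp(\log z)=z$ there), hence a conformal homeomorphism of the closed disk $D$ onto $\bfThe=\log(D)$. Consequently $\bfThe$ is a Jordan domain whose boundary is the image curve $\log(\partial D)$, and because the map is orientation preserving, proving convexity of $\bfThe$ reduces to showing that this boundary curve has signed curvature of one constant sign.

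First I would parametrize $\partial D$ by $z(\phi)=1+re^{j\phi}$, $\phi\in(-\pi,\pi]$, and set $w(\phi)=\log z(\phi)$. Writing $w=u+jv$, the signed curvature of the planar curve $(u,v)$ has numerator $\Im(\overline{w'}\,w'')$ and positive denominator $|w'|^3$, so only the sign of $\Im(\overline{w'}\,w'')$ matters. Using $w'=z'/z$ and $w''=z''/z-(w')^2$ together with $z'=jre^{j\phi}$ and $z''=-re^{j\phi}$, a direct computation (noting $\overline{z'}z''=jr^2$ and $|w'|^2=r^2/|z|^2$) collapses to
\[
\Im(\overline{w'}\,w'')=\frac{r^2\,(1+r\cos\phi)}{|z|^4}.
\]

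The final step is then immediate: since $r<1$ we have $1+r\cos\phi\geq 1-r>0$ for every $\phi$, so the curvature is strictly positive along the entire boundary. A simple closed curve of constant-sign curvature bounds a convex region, which yields convexity of $\bfThe$. I expect the main obstacle to be the curvature computation itself, and in particular recognizing that the whole sign question is governed by the single factor $1+r\cos\phi$; this is exactly where the hypothesis $r<1$ enters, and it is consistent with the fact that for $r\geq1$ the disk $D$ would touch or enclose the origin, where the logarithm ceases to be bounded or even defined.
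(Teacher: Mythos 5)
The paper states Lemma~\ref{lem:LogConvex} explicitly ``without proof,'' so there is no argument of the authors' to measure yours against; your proposal has to stand on its own, and it does. The setup is sound: since $0<r<1$, the disk $D=1+\bfdel$ lies in $\Re(z)\geq 1-r>0$, the principal logarithm is holomorphic and injective there, and (by the open mapping theorem) it carries the interior of $D$ onto the interior of $\bfThe$ and $\partial D$ onto $\partial\bfThe$, so convexity of $\bfThe$ does reduce to constancy of the sign of the curvature of $w(\phi)=\log(1+re^{j\phi})$. The computation checks out: $\overline{z'}z''=jr^2$, $|z|^2=1+2r\cos\phi+r^2$, and hence $\Im(\overline{w'}\,w'')=\bigl(r^2|z|^2-r^3(\cos\phi+r)\bigr)/|z|^4=r^2(1+r\cos\phi)/|z|^4$, which is strictly positive precisely because $r<1$. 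Invoking the standard fact that a simple closed $C^2$ curve with curvature of constant sign bounds a convex region then finishes the argument. Two small points worth tightening: (i) state explicitly (or cite) the theorem that a positively curved Jordan curve bounds a convex set, since that is the load-bearing geometric step; (ii) your calculation is the classical convexity criterion of geometric function theory in disguise, namely $\Re\bigl(1+\zeta f''(\zeta)/f'(\zeta)\bigr)>0$ on $|\zeta|=1$ for $f(\zeta)=\log(1+r\zeta)$, which evaluates to $\Re\bigl(1/(1+r\zeta)\bigr)=(1+r\cos\phi)/|1+r\zeta|^2$ --- the same governing factor $1+r\cos\phi$. Citing that criterion directly would let you dispense with the curvature computation altogether.
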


\medskip

Lemma~\ref{lem:convex} essentially establishes the equivalence between the
heterogeneous and homogeneous perturbations. Replacing $\IV$ by $\IW$, 
the characteristic equation in (\ref{checyc}) becomes
\[
(-\mu)^nh(s)^n(1+\delta(s))^n=1.
\]
Recalling that $n$ is odd and taking the $(1/n)^{\rm th}$ power, 
\[
\hspace*{-2mm}
\lambda_ih(s)(1+\delta(s))=1, \; 
\lambda_i:=\mu e^{j(2i-1)\pi/n}, \; i\in\II_n.
\]
Rearranging,
\[
\hspace*{-2mm}
1-\delta(s)g_i(s)=0, \hs g_i(s):=\frac{\lambda_ih(s)}{1-\lambda_ih(s)},
\; i\in\II_n,
\]
which is the characteristic equation in (\ref{rhou}) obtained by assuming
that the perturbation is homogeneous, i.e., $\Delta=\delta I$ and noting
that $\lambda_i$ are the eigenvalues of $A$. A subtle issue is that 
$\delta(s)$ satisfying (\ref{replace}) is not rational in general even if
$\delta_i\in\RHinf$. Thus, Lemma~\ref{lem:convex} establishes the equivalence
of homogeneous and heterogeneous perturbations in the context of robust
instability analysis when the class of perturbations is extended to include
irrational transfer functions. For the rational case of our interest, 
Lemma~\ref{lem:convex} is useful for characterizing a lower bound on the
robust instability radius.

\subsection{Lower bound on robust instability radius}
\label{subsec:LoweBound}

We here present some lower bounds on the RIRs based on the convexity property 
shown in the previous subsection for multi-agent dynamical systems for $G(s) \in \calG_c$.   

\medskip

\begin{theorem} \label{thm:cyclb}
{\em 
Consider the cyclic multi-agent dynamical systems  $G(s) \in \calG_c$ with $\Delta\in\bfDel$. 
We assume the system is nominally strictly unstable\footnote{
A system is called strictly unstable if it is unstable with at least one characteristic root 
in the open right half plane.
}, 
and the number of subsystems $n$ is odd. 
We consider the following three classes of perturbations: 
(i) dynamic heterogeneous uncertainty $\Delta\in\bfDel$ with (\ref{bfDel}), 
(ii) dynamic homogeneous uncertainty $\Delta=\delta I$ with $\delta\in\RHinf$, and 
(iii) parametric complex uncertainty $\Delta\in\bfDel_c$.  
Define the corresponding RIRs $\rho_*$, $\rho_h$, and $\rho_c$, by (\ref{rir}), (\ref{hrir}), 
and (\ref{crir}), respectively.  
Then $\rhoL_+$ is a lower bound on $\rho_*$, 
$\rho_h$, and $\rho_c$, {\it i.e.},
\begin{equation} \label{LowerBounds}
 \rhoL_+ \leq \rho_*, \hs \rhoL_+ \leq \rho_h, \hs \rhoL_+ \leq \rho_c ,
\end{equation}
where
\begin{equation} \label{varrho}
\rhoL_+:=\max_{k\in\IU}\frac{1}{\|g_k\|_{L_\infty}}, \hs
\end{equation}
\[
\hspace*{-4mm}
g_k(s):=\frac{\lambda_kh(s)}{1-\lambda_kh(s)}, \;
\lambda_k:=\mu e^{j(2k-1)\pi/n}, \hs k\in\II_n.
\]
and $\IU\subseteq\II_n$ is the set of indices $k$ such that $g_k(s)$ is unstable. 
}
\end{theorem}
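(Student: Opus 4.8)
The plan is to reduce all three bounds to a single \emph{per-channel} necessary condition for strong stabilization and then exploit the product structure of the characteristic equation (\ref{checyc}). The basic building block I would establish first is the following: if $g_k$ is unstable, with $p_k\ge 1$ poles in the open right half plane, then any stable scalar $\delta$ for which $1-\delta g_k=0$ has all its roots in the open left half plane must satisfy $\|\delta\|_{H_\infty}\ge 1/\|g_k\|_{L_\infty}$. I would prove this by the argument principle applied to $1-\delta g_k$: since $\delta\in\RHinf$ adds no right-half-plane poles, moving the $p_k$ unstable poles of $g_k$ into the left half plane forces the Nyquist locus of $\delta g_k$ to encircle the point $1$, which is impossible when $|\delta(j\omega)g_k(j\omega)|\le\|\delta\|_{H_\infty}\|g_k\|_{L_\infty}<1$ for all $\omega$, as the locus would then stay inside the open unit disc and have winding number zero about $1$.

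For the homogeneous class (ii), the characteristic equation factors as $\prod_{i\in\II_n}(1-\delta g_i)=0$, so a stabilizing $\delta$ must simultaneously stabilize every unstable channel $k\in\IU$. Applying the building block to each such $k$ gives $\|\delta\|_{H_\infty}\ge 1/\|g_k\|_{L_\infty}$ for all $k\in\IU$, hence $\|\delta\|_{H_\infty}\ge\rhoL_+$ and therefore $\rho_h\ge\rhoL_+$. For the complex parametric class (iii) I would invoke Lemma~\ref{lem:convex}: equation (\ref{checyc}) depends on $\Delta\in\bfDel_c$ only through the product $\prod_{i=1}^n(1+\delta_i)$, and with $r:=\max_i|\delta_i|$ this product lies in $\IV=\IW$, so there is a single $\delta\in\IC$ with $|\delta|\le r$ giving $(1+\delta)^n=\prod(1+\delta_i)$ and hence an identical characteristic equation. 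Thus any stabilizing heterogeneous parametric perturbation is matched by a homogeneous one with $|\delta|\le r$, and the per-channel argument for constant $\delta$, where $|\delta(j\omega)g_k(j\omega)|=|\delta|\,|g_k(j\omega)|$, forces $|\delta|\ge\rhoL_+$ and thus $r\ge\rhoL_+$, yielding $\rho_c\ge\rhoL_+$.

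For the dynamic heterogeneous class (i) I would again use that (\ref{checyc}) depends on $\Delta$ only through $P(s):=\prod_{i=1}^n(1+\delta_i(s))$, applying Lemma~\ref{lem:convex} frequency by frequency: for a stabilizing $\Delta\in\bfDel$ with $\|\Delta\|_{H_\infty}=r$, each value $P(j\omega)$ equals $(1+\delta(j\omega))^n$ for some $|\delta(j\omega)|\le r$, so the admissible multiplicative envelope is exactly the homogeneous one. I would then show by contradiction that $r<\rhoL_+$ forces $1+\mu^n h^n P$ to retain a right-half-plane zero: choosing $k^\ast\in\IU$ with $r<1/\|g_{k^\ast}\|_{L_\infty}$, the factor associated with channel $k^\ast$ cannot cross the imaginary axis because the available loop gain stays below the encirclement threshold set by $\|g_{k^\ast}\|_{L_\infty}$. \textbf{The main obstacle} is precisely the point flagged after Lemma~\ref{lem:convex}: the effective homogeneous $\delta$ reproducing $P$ is in general irrational, so the clean factorization $\prod(1-\delta g_i)$ is not globally available, and stability is a global right-half-plane property rather than a pointwise one. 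I expect to resolve this by a homotopy argument deforming $P$ to the nominal value $1$ within the magnitude envelope $|P(j\omega)|^{1/n}\le 1+r$ permitted by Lemma~\ref{lem:convex}, tracking the net right-half-plane zero count of $1+\mu^n h^n P$ and showing it cannot vanish for channel $k^\ast$ when $r<1/\|g_{k^\ast}\|_{L_\infty}$; continuity of the roots, as already used for the diagonalizability footnote, then transfers the conclusion to the rational case and gives $\rho_\ast\ge\rhoL_+$.
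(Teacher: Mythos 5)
Your building block (the Nyquist/small-gain necessity $\|\delta\|_{H_\infty}\ge 1/\|g_k\|_{L_\infty}$ for any stable $\delta$ that stabilizes an unstable channel $g_k$) is correct, and your treatments of classes (ii) and (iii) are sound and consistent with the strategy the paper declares for its omitted proof, namely contradiction combined with Lemma~\ref{lem:convex}. For (ii) the global factorization $\det(I-\delta G)=\prod_i(1-\delta g_i)$ lets you impose the necessity channel by channel and take the maximum over $\IU$; for (iii), Lemma~\ref{lem:convex} applied to constants legitimately replaces $\prod_i(1+\delta_i)$ by $(1+\delta)^n$ with $|\delta|\le r$ (note you implicitly need $r<1$ for Lemma~\ref{lem:LogConvex} to apply, which does hold in the contradiction setup since $\rhoL_+<1$ when $h$ is stable), after which the constant-$\delta$ channel argument goes through.

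The gap is in class (i), which is the substantive part of the theorem, and it sits exactly where you flagged it --- but the repair you propose does not close it. Once the perturbation is heterogeneous and dynamic, $1+\mu^n h(s)^n\prod_i(1+\delta_i(s))$ admits no channel-wise factorization over $\RHinf$, so ``the right-half-plane zero count of channel $k^*$'' is not a defined quantity that a homotopy can track. What the argument principle controls is only the \emph{total} count: writing $\prod_i(1+\delta_i(j\omega))=(1+\delta(\omega))^n$ pointwise (a continuous but non-rational $\delta(\omega)$ with $|\delta(\omega)|\le r$) and factoring the boundary curve as $\prod_k(1-\lambda_k h(j\omega))\,(1-\delta(\omega)g_k(j\omega))$, stability forces $\sum_k w_k=\sum_k p_k\ge 1$, where $w_k$ is the winding number of $1-\delta(\omega)g_k(j\omega)$ about the origin and $p_k$ is the number of unstable poles of $g_k$. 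Your hypothesis $r<1/\|g_{k^*}\|_{L_\infty}$ only yields $w_{k^*}=0$; indices $k$ with $r\|g_k\|_{L_\infty}\ge 1$ --- which exist whenever $r>\rhoL_p$, i.e., precisely in the regime where the theorem improves on Proposition~\ref{prop:ul}(b) --- may contribute $w_k\neq 0$ and compensate, so no contradiction follows. Without an additional argument excluding such compensation, this route only recovers the weaker bound $\rhoL_p=\min_{k\in\II_n}1/\|g_k\|_{L_\infty}$ already given in Proposition~\ref{prop:ul}(b). The missing idea is how to localize the obstruction to the single index $k^*$ in the heterogeneous dynamic case; the paper omits its proof, so no line-by-line comparison is possible, but your sketch as written does not supply that step.
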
 

\smallskip

Although the detailed proof is omitted due to the page limitation,  
the proof is done by contradiction using Lemma \ref{lem:convex}.

\section{Exact RIR Analysis}
\label{sec:ExactRIR}

This section will show that we can get the RIR for first order time-lag systems by a constructive way 
proposed in \cite{HIH:LCSS2020, HIH:Automatica2020}, where the first step is to find a marginally stabilizing 
all-pass function and the second step is a technique of $\epsilon$ perturbation for the complete stabilization. 

\subsection{First order time-lag systems}
\label{subsec:1stOrderExact}

Consider the first order time-lag system represented by  
\begin{equation} \label{1stLag}
h(s):=\frac{K}{\tau s+1}, \hs K>0, \; \tau>0 . 
\end{equation}
Fortunately, we can get the exact RIR for the system as follows. 

\medskip

\begin{prop} \label{prop:1st}
{\em 
Consider the cyclic multi-agent dynamical systems  $G(s) \in \calG_c$ 
with the first order $h(s)$ in (\ref{1stLag}) and $\Delta\in\bfDel$. 
We assume the system is nominally strictly unstable, i.e.,  
$K < \mu \cos{\theta_n}; \;  \theta_n  := \pi/n$  and the number of subsystems $n$ is odd. 
Then, the robust instability radius $\rho_*$  is given by  
\begin{equation} \label{ExactRIR:1stLag}
  \rho_* = \rhoL_+ = 1/\|g_1\|_{L_\infty} = 1 - K /(\mu \cos{\theta_n} ). 
\end{equation}
}
\end{prop}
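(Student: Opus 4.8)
The plan is to establish the chain of equalities in (\ref{ExactRIR:1stLag}) by sandwiching $\rho_*$ between a lower bound and an upper bound that coincide. The lower bound $\rhoL_+ \leq \rho_*$ is already supplied by Theorem~\ref{thm:cyclb}, so the real work is twofold: first, to evaluate $\rhoL_+$ explicitly and show it equals $1 - K/(\mu\cos\theta_n)$ with the maximizing index being $k=1$; second, to construct a stabilizing perturbation $\delta \in \RHinf$ of exactly this norm, giving the matching upper bound $\rho_* \leq 1/\|g_1\|_{L_\infty}$.

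For the lower-bound computation, I would substitute $h(s)=K/(\tau s+1)$ into $g_k(s)=\lambda_k h(s)/(1-\lambda_k h(s))$ to get $g_k(s)=\lambda_k K/(\tau s + 1 - \lambda_k K)$, a first-order transfer function with pole at $s=(\lambda_k K - 1)/\tau$. Since $\lambda_k=\mu e^{j(2k-1)\pi/n}$, the real part of this pole is $(\mu K\cos((2k-1)\pi/n)-1)/\tau$, which is largest (most unstable) for $k=1$, where the angle $(2k-1)\pi/n=\theta_n$ is smallest. Under the strict-instability hypothesis $K<\mu\cos\theta_n$, I would verify that $g_1$ is indeed unstable so that $1\in\IU$, and then compute $\|g_1\|_{L_\infty}$. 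Because $g_k$ is a first-order lag-type function, its $L_\infty$ norm is attained either at $s=0$ or as $s\to\infty$; evaluating $|g_1(j\omega)|$ as a function of $\omega$ and checking monotonicity should yield $\|g_1\|_{L_\infty}=|g_1(0)|=\mu K\cos\theta_n/\!\bigl|1-\lambda_1 K\bigr|$ or a comparable closed form, and after simplification I expect $1/\|g_1\|_{L_\infty}=1-K/(\mu\cos\theta_n)$. I would also argue that among all unstable indices $k\in\IU$, the maximum in (\ref{varrho}) is achieved at $k=1$, so that $\rhoL_+=1/\|g_1\|_{L_\infty}$.

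For the matching upper bound I would follow the two-step constructive recipe of \cite{HIH:LCSS2020, HIH:Automatica2020} announced at the start of Section~\ref{sec:ExactRIR}. The first step is to find a marginally stabilizing all-pass perturbation: I would seek a homogeneous $\delta(s)$ with $\|\delta\|_{H_\infty}=\rhoL_+$ that is all-pass (so its magnitude equals $\rhoL_+$ at every frequency) and that drives the critical characteristic root of $1-\delta(s)g_1(s)=0$ exactly onto the imaginary axis at the frequency $\omega^\star$ where $|g_1(j\omega^\star)|=\|g_1\|_{L_\infty}$. The second step uses the $\epsilon$-perturbation technique to push all remaining closed-loop roots strictly into the open left half plane while keeping the norm arbitrarily close to $\rhoL_+$, so that the infimum in (\ref{rir}) equals $\rhoL_+$. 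Since this homogeneous $\delta$ lies in the admissible class $\bfDel$ (as $\delta I$ is block-diagonal), it certifies $\rho_* \leq \rhoL_+$, and combined with the lower bound this forces equality throughout.

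The main obstacle I anticipate is the all-pass construction in the first step: one must ensure that the single scalar all-pass $\delta$ simultaneously respects the $\RHinf$ (stable, proper, rational) constraint, attains its norm at the correct critical frequency $\omega^\star$, and places a root of the \emph{homogeneous} characteristic equation precisely on the imaginary axis while leaving the system otherwise unstable-turning-stable in a controlled way. The interpolation conditions that pin down $\delta(j\omega^\star)$ (matching both magnitude $\rhoL_+$ and the phase needed for marginal stability) are where the delicate reasoning lies, since the worst-case frequency and the all-pass phase profile must be made mutually consistent; the subsequent $\epsilon$-perturbation and the lower-bound arithmetic are comparatively routine.
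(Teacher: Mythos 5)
The paper itself omits the detailed proof (``The proof is omitted due to the page limitation''), but the strategy it announces and then illustrates with the inverse Nyquist plot in Section~\ref{subsec:1stOrderExact} is the same as yours: take the lower bound $\rhoL_+\leq\rho_*$ from Theorem~\ref{thm:cyclb}, then match it from above by the two-step all-pass/$\epsilon$-perturbation construction. So your architecture is right, but there is a genuine gap in the upper-bound step. A homogeneous $\delta$ stabilizes the network only if it simultaneously stabilizes all $n$ decoupled loops $1-\delta(s)g_k(s)=0$, $k\in\II_n$, not just the critical one $k=1$. Your construction pins $\delta$ down by interpolation conditions at the single frequency $\omega^\star$ where $|g_1(j\omega)|$ peaks, and then hands ``all remaining closed-loop roots'' to the $\epsilon$-perturbation. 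But the $\epsilon$-perturbation can only nudge roots that the all-pass step has already brought onto (or arbitrarily near) the imaginary axis; it cannot repair a mode $g_k$, $k>1$, that the all-pass $\delta$ leaves strictly unstable, and nothing in your argument rules that out. This simultaneous stabilization is exactly what the paper's Nyquist discussion is devoted to: writing the perturbed inverse plot $\tilde\phi(j\omega)=\phi(j\omega)/(1+\delta(j\omega))$ with $\phi=1/h$, one shows that $|\tilde\phi(j\omega)|$ increases monotonically in $\omega>0$, so that once $\lambda_1=\mu e^{j\pi/n}$ is placed on the curve, every other eigenvalue $\mu e^{j(2k-1)\pi/n}$ (same modulus, larger angle) automatically lies to the left of the curve, i.e., in the stability region. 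Without this monotonicity/ordering argument---which exploits the first-order structure of $h$ and the fact that $\lambda_1$ is the eigenvalue closest to the positive real axis---the upper bound does not close.

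There is also a concrete computational slip in your lower-bound evaluation: $g_k(s)=\lambda_kK/(\tau s+1-\lambda_kK)$ has a \emph{complex} pole, so its gain does not peak at $\omega=0$ or $\omega=\infty$. Rather, $|g_1(j\omega)|$ is maximized at $\tau\omega=K\mu\sin\theta_n$, where $|j\tau\omega+1-\lambda_1K|$ attains its minimum $|1-K\mu\cos\theta_n|$; equivalently, $1/\|g_1\|_{L_\infty}=\min_\omega|\phi(j\omega)-\lambda_1|/\mu$ is the distance from $\lambda_1$ to the vertical line $\Re\,\phi=1/K$, normalized by $\mu$. Your candidate value $|g_1(0)|$ strictly underestimates the norm. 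You should carry this computation through explicitly and reconcile it with the stated closed form $1-K/(\mu\cos\theta_n)$ and with the instability condition $K<\mu\cos\theta_n$ (the nominal poles sit at $s=(K\lambda_k-1)/\tau$, so strict instability is governed by the sign of $K\mu\cos\theta_n-1$); the normalization conventions are easy to get wrong here, and all members of the chain of equalities in (\ref{ExactRIR:1stLag}) must be verified against the same convention rather than asserted.
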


\smallskip

The proof is omitted due to the page limitation. 

\smallskip

We here illustrate the situation by using the 
inverse Nyquist plot of $h(s)$, i.e., $\phi(j\omega) = 1/h(j\omega)$, and its perturbed version.  

Consider an example with $n=9$ , $\mu=3$, $K=1$, and $\tau=1$. 
The Nyquist plot of $\tilde\phi_o(s)$ (see the appendix for its definition) 
is plotted as the red curve in Fig.~\ref{fig:horaku1}. 
The stability region for $\tilde\phi_o(s)$ is given by the region to the left of the Nyquist plot 
$\tilde\phi_o(j\omega)$ indicated by the red curve.
The yellow region indicates the value set 
\[
\{\phi(j\omega)/(1+\delta):\, \omega\in\IR, ~ \delta\in\IC, ~ 
|\delta|\leq \rhoL_+~\}.
\]
The point $\lambda_1=3e^{j\pi/9}$ lies
on the right boundary of the yellow region (blue star). 
The gain $|\tilde\phi_o(j\omega)|$ monotonically increases
with $\omega>0$. This property implies that if 
$\mu e^{j\pi/n}$ is on the Nyquist plot $\tilde\phi_o(j\omega)$, then all the
points $\mu e^{j(k/n)\pi}$ for odd $1<k\leq n$ lie to the left of the Nyquist
plot (i.e. the stability region). To see this, note that the portion of the
Nyquist plot for $\omega>\omega_p$ is outside of the circle $\mu e^{j\theta}$,
where the lower bound frequency is defined by 
$\mu e^{j\pi/n}=\tilde\phi_o(j\omega_p)$.
Likewise, the low frequency portion $\omega<\omega_p$ is inside the circle.
This implies that the inverse Nyquist plot $\tilde\phi_\varepsilon(s)$  satisfies the stability 
condition meaning that the perturbed network with 
$\tilde h_\varepsilon(s) := 1/\tilde\phi_\varepsilon(s)$ is stable.

\begin{figure}[h]
\centerline{\epsfig{file=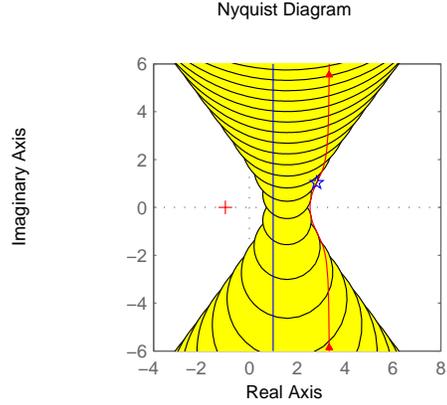,width=60mm}}
\caption{Nyquist plots: $\phi(j\omega)$ is blue, 
$\tilde\phi_o(j\omega)$ is red. }
\label{fig:horaku1}
\end{figure}
\subsection{Possible Extension}
\label{subsec:Extension}

The idea for the derivation of the exact RIR for first order time-lag systems has a potential to 
be applicable to some classes of higher order systems including second order time-lag systems.  
To show this, we give a numerical example with $h(s) = 3/(s+1)(s+3)$  and  $\mu = 5$. 

We only consider the case of odd $n$, where the eigenvalues of $A$ are at 
$\lambda_k=5e^{j(k/n)\pi}$ for $k=1,3,\ldots,2n-1$.
From Proposition~\ref{prop:ul}, the upper and lower bounds on the
robust instability radius are given by $\rhoU$ and $\rhoL_p$ with
\[
g_i(s):=\frac{\lambda_i h(s)}{1-\lambda_i h(s)} 
= \frac{3 \lambda_i}{(s+1)(s+3) - 3 \lambda_i} . 
\]

Numerical results of robust instability radius for odd $n$ are plotted in Fig.~\ref{fig:rir}.  
The lower bound $\rhoL_p$ defined by (\ref{rhol}) is shown by the blue curve.  
The exact values for the dynamic RIR $\rho_*$ as well as for the complex parametric RIR $\rho_c$ 
are computable for this example based on the results presented later, and are indicated 
by the red and green curves. In fact, $\rho_*=\rho_h=\rhoU$ for this example. 
The exact values are based on the eigenvalue $\lambda_1$ that is the closest 
to the real axis. The lower bound $\rhoL_p$ is based on the
eigenvalue $\lambda_k$ closest to the Nyquist plot $\phi(j\omega)$.
Although complex parametric uncertainty and real rational dynamic uncertainty are equivalent 
for robust stability, they turned out not to be equivalent for robust instability as shown by the gap 
in the red and green curves. 
The property that $\rho_c$ is an upper bound on $\rho_*$ is not a coincidence for this example.

\begin{figure}[h]
\centerline{\epsfig{file=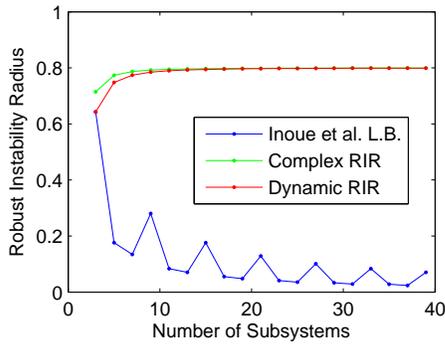,width=60mm}}
\caption{Robust instability radius. }
\label{fig:rir}
\end{figure}
\section{Conclusion}
\label{sec:Concl}

This paper proposed a robust instability problem for multi-agent dynamical systems with cyclic structure. 
It has been shown that we can get the exact robust instability radius (RIR) for first order time-lag systems. 
The main contribution is that we can get the exact value of maximum allowable dynamic perturbation 
for the persistence of instability for multi-agent dynamical systems with any number of agents.  
This is a clear contrast with the previous result in \cite{HIH:Automatica2020}, which is difficult to apply 
to large-scale network  systems.  
Although this paper only provided  the first step results, the approach has a potential to apply to the wider 
class of subsystems $h(s)$ as briefly explained in Subsection \ref{subsec:Extension}.   

The future work along this research direction includes 
(i) a characterization of class of systems for which the RIR can be analyzed exactly, 
(ii) an extension to the robust instability margin analysis for a class of nonlinear systems by taking the change of the 
equilibrium point into account, and  
(iii) applications to a more general type of biomolecular systems than Repressilator investigated in \cite{HIH:Automatica2020}. 

\smallskip
 
\noindent
{\bf Acknowledgments:}  $\;$ 
This work was supported in part by the Ministry of Education, Culture, Sports, Science and Technology in Japan 
through Grant-in-Aid for Scientific Research (A) 21246067 and (B) 18H01464.

\end{document}